\renewcommand{\textbf}[1]{\begingroup\bfseries\mathversion{bold}#1\endgroup}
\begin{document}

\newtheorem{defi}{Definition}
\newtheorem{thm}[defi]{Theorem}
\newtheorem{clm}[defi]{Claim}
\newtheorem{lmm}[defi]{Lemma}

\date{}
\author[1]{\small Louis Mathieu}
\author[2]{\small Mehdi Mhalla}
\affil[1]{\footnotesize Univ. Grenoble Alpes}
\affil[2]{\footnotesize Univ. Grenoble Alpes, CNRS, Grenoble INP, LIG, F-38000 Grenoble France}
\title{Separating pseudo-telepathy games and two-local theories}

\maketitle
   \begin{abstract}
We give an $\dfrac{1}{54}$ separation between 5-party pseudo-telepathy games and two-local theories. We define the notion of strategy in a k-local theory for a game, and extend the method of \cite{test}. We also study variation of the game to minimize the classical winning probability.
   \end{abstract}

\section{Introduction}
Nonlocal correlations are one of the most fascinating aspects of quantum mechanics predictions. It offers possibilities that surpass local classical models. Such correlations between multiparties can be obtained in different ways : \begin{itemize}
\item Physically : The different parties can share entanglement through a quantum device.
\item Theoretically : Sharing non-signaling resources such as 'nonlocal box' or other tools, enables parties to study simulations of nonlocal correlations.
\end{itemize}
The term non-signaling means that, given a set of parties $(1, \ldots,n)$,  who receive $(x_1,\ldots,x_n)$ and answer $(a_1,\ldots,a_n)$, the marginal probability distribution $P(a_1,\ldots,a_{i-1},a_{i+1},\ldots,a_n | x_1,\ldots,x_n)$ is independant of $x_i$ \cite{NS}.

In 1964, John S. Bell gave inequalities that can be used to prove the existence of quantum correlations by repeating specific tests and compare observed values with adapted theoretical values. Construction of these inequalities required a lot of work to be properly tested and accepted. 
Nevertheless, operations and measurement on a shared quantum state between multiple parties give non-signaling correlations, and for example maximum algebraic violation of the CHSH inequality can be reached by a non-signalling box.
Thus, we need some experiments that can highlight dissimilarities between these two kinds of tools. 

Nonlocal games 	are very efficient to study this part of quantum information theory. Their goal is to reveal nonlocal correlations  that cannot statistically appear under classical theories. A nonlocal game involves two or more players, spatially separated to ensure they don't communicate. They receive questions  and have to produce an answer satisfying certain conditions to win. Many works have been performed to understand them well. To separate quantum theories from classical theories, they could design "self-tests" that allows users to verify operations of quantum devices without trusting it \cite{self-testing,self}. 

The nonlocality of quantum correlation is less powerful than general non-signaling tools. However, some games have been designed to separate robustly quantum from non-signaling theories with restriction of locality. For example, pseudo-telepathy games \cite{broadbent} are games that can be won perfectly by players sharing quantum resources but not if they share only local randomness \cite{magic,self,pseudo}. In \cite{contex}, the notion of "contextuality width" is introduced to show the impact of locality of shared non-signaling information \cite{pseudo,pironio,contex}. That is what interests us, we want to separate quantum theories from two-local non-signaling ones. In \cite{test}, these two correlations are separated from $5.1 \%$, but it relies on test that can't be won perfectly by quantum players. 

Nonlocal games can be studied with different approaches. First, we can see it as a combinatorial game, where players have questions, they produce answer and we check if their answer satisfy the rule of the game. The game is determined by a set containing pairs of questions and answers. If they produce an answer associated to their question belonging to the losing set of the game, they lose. Then, the goal of the players is to collaborate, elaborating the best strategy possible to try to never give answers in the losing set. These are called Multipartite collaborative games \textbf{MCG} \cite{contex}.

On the other hand, we can define a game by associating a protocol where an external verifier (referee) picks questions from a known probability distribution for the players. After, the referee checks if players answered correctly to their question. This allows us to study shrewdly the probability to win to the game by analysing question by question the probabilities. This is the angle of study in \cite{self,test,XOR} and in this work. 
Game questions can have different structures. First, this can be simple questions, chosen to imply probabilities to win and ease the analysis, such as the $C_5$ game defined in \cite{self} or even games in \cite{test,three}.
Secondly, questions can be numerous but easily generated such as in \cite{three} or in \cite{pseudo, contex} where questions are generated by graph states.

When the losing predicates depend on the xor of the player's answers, the game belongs to the family of XOR games. It has been studied with linear algebraic tools in \cite{XOR}.  However two cases are possible, either all the answers are taken in account for all questions, either specific players are chosen to be "involved" for each question.

The object of this study was to find a separation between quantum theories and two local non-signaling theories, like in \cite{test}, we tried to find it with pseudo telepathy games such as the $C_5$ game \cite{pseudo}.

\section{Preliminaries}

\begin{flushleft}
 \textbf{Game notations.} Let G be a n-player game. Let $(x_1,\ldots,x_n) \in \lbrace 0,1 \rbrace^n$ be their inputs and $(a_1,\ldots,a_n) \in \lbrace 0,1 \rbrace^n$ their outputs. We call a question $Q$ an element of $\lbrace 0,1 \rbrace^n$ as inputs for players. For each question, an external verifier will check a predicate to validate their answers. If the answer $a_v$ of a player $v$ to a question $Q$ can modify the validity of the predicate, we say that this player is \textit{involved} in question $Q$. 
 
 We introduced the following set to characterize games : $\mathcal{G} := \lbrace (p,Q,I,s) \rbrace$ where $Q$ is a question, $p$ is the probability for $Q$ to be asked, $I$ is the set of involved players in this question and $s$ is the value associated to the following win condition : $$\sum\limits_{i \in I} a_i=s.$$ Then a game is totally determinate by this set.
 \end{flushleft} 
 
To study properly non-signaling correlations with locality restriction, \cite{test} introduced the notion of k-party multiround resource :
	\begin{defi}
	A k-party multi-round non-signaling resource is a conditional probability distribution $P(A^{(1)},...,A^{(k)} \; | X^{(1)},...,X^{(k)})$ satisfying, for all $j_1,...,j_k \; \geq 0$ $$ P(A^{(1)}_{1:j_1},...,A^{(k)}_{1:j_k} \; | X^{(1)},...,X^{(k)})=P(A^{(1)}_{1:j_1},...,A^{(k)}_{1:j_k} \; | X^{(1)}_{1:j_1},...,X^{(k)}_{1:j_k})$$
	Where $A^{(k)}$ and $X^{(k)}$ are bit vectors, corresponding respectively to outputs and inputs players can have with this resource, and $A^{(1)}_{1:j_k}$ corresponds to the $j_k$ first coordinates of the vector.
	\end{defi}
This tool enables to consider only one resource per set of players instead of unlimited one-round resources between these players. 
Then, to extend the result of \cite{test}, we will use the following definition and lemma :
	\begin{defi}\textbf{\cite{test}}
	A question $Q^{'}$ is \textbf{v-compatible} with $Q$ if, either v's input differs from $Q^{'}$ to $Q$, or for any other player u involved in $Q$, either u's input is the same as in $Q$, either u is not involved in $Q^{'}$.
	\end{defi}
	
	\begin{lmm}\textbf{\cite{test}}
	Consider a unique nonlocal game. Let S be a strategy, for classical players using non-signaling resources, that wins with probability at least $1-\varepsilon$ on all questions. Fix a player $v$ and a non-signaling resource $R$ involving $v$, and assume that there exists a question $Q$ such that the verifier's acceptance predicate depends only on the responses of $v$ and players $U$ not involved in $R$. Assume further that for any other question $Q^{'}$ in which $v$'s input is the same as in $Q$, for any player $u \in U$ either $u$'s input is the same as in $Q$ or the verifier's acceptance predicate does not depend on $u$'s response.
	
	Then there exists a strategy $S^{'}$ that wins with probability at leat $1-3\varepsilon$ on all questions, and that is the same as $S$ except for $v$'s behavior on its input in question $Q$; on this input, $v$ ignores the resource $R$, and in fact we have $P(S^{'} \; loses|Q) \leq P(S \; loses|Q)$ and $P(S^{'} \; loses|Q^{'}) + 2p(S \; loses|Q)$ for any $Q^{'}$ in which $v$'s input is the same as in $Q$.
	\end{lmm}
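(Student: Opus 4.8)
The plan is to exhibit the modified strategy $S'$ explicitly and then verify the two promised inequalities, from which the $1-3\varepsilon$ guarantee is immediate. Let $x_v$ be $v$'s input in $Q$ and let $g$ denote the (randomized) post\-processing by which $v$, on input $x_v$, turns the output it receives from $R$ into its reply; write $U$ for the players on whose answers the predicate of $Q$ depends besides $v$. The idea is: in $S'$, on input $x_v$ player $v$ simply discards the output of $R$ and answers a \emph{fixed bit} $r^\star$, chosen to be the reply that, against the question\-$Q$ joint distribution of the answers of $U$, violates the win condition $\sum_{i\in\{v\}\cup U}a_i=s$ with the least probability; call this probability $\ell(r^\star)=\min_r\ell(r)$, where $\ell(r)$ is the probability that the equation fails when $v$ answers $r$. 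Everything else in $S$ is untouched — in particular $v$ keeps feeding $R$ whatever input it used to — so $S'$ differs from $S$ only in $v$'s behaviour on the single input $x_v$, and no other player's view of $R$, nor any question in which $v$'s input differs from $x_v$, is affected.

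First I would establish the non\-signaling decoupling on $Q$. Condition on $v$'s private randomness (hence on $v$'s input to $R$). Since the players of $U$ are not involved in $R$ and cannot communicate, the non\-signaling property of $R$ forces the marginal law of $v$'s output from $R$ to be independent of everything $U$ does in $Q$; therefore $P(S\text{ loses}\mid Q)$ is exactly the average of $\ell(g(B,\cdot))$ over the output $B$ of $R$, and in particular $P(S\text{ loses}\mid Q)\ge \ell(r^\star)$. Under $S'$ player $v$ answers $r^\star$ deterministically on input $x_v$, so $P(S'\text{ loses}\mid Q)=\ell(r^\star)\le P(S\text{ loses}\mid Q)$, which is the first claimed inequality.

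For the second inequality, take any $Q'$ with $v$'s input equal to $x_v$ (if $v$ is not involved in $Q'$ there is nothing to prove), and couple the runs of $S$ and $S'$ so that all shared resources and all private randomness agree. The two transcripts of $Q'$ then coincide unless $v$'s reply actually changed, i.e.\ unless $g(B,\cdot)\neq r^\star$; here the compatibility hypothesis is used, since it guarantees that the members of $U$ contributing to $Q'$ receive the same inputs as in $Q$, so that $v$'s reply is genuinely the only thing that can differ. Hence $P(S'\text{ loses}\mid Q')\le P(S\text{ loses}\mid Q')+\Pr[g(B,\cdot)\neq r^\star]$, and it remains to bound $\Pr[g(B,\cdot)\neq r^\star]$ by $2\,P(S\text{ loses}\mid Q)$. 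This is where the affine (parity) shape of the win condition enters: for each value of $\sum_{u\in U}a_u$ exactly one reply of $v$ satisfies the equation, so $\ell(r^\star)+\ell(\overline{r^\star})=1$ and $\ell(\overline{r^\star})\ge\tfrac12$; combining with the decoupling of the previous paragraph, $P(S\text{ loses}\mid Q)\ge \Pr[g(B,\cdot)=\overline{r^\star}]\cdot\ell(\overline{r^\star})\ge \tfrac12\Pr[g(B,\cdot)\neq r^\star]$, as needed. On every question where $v$'s input is not $x_v$, $S'$ agrees with $S$; so, using $P(S\text{ loses}\mid\cdot)\le\varepsilon$ everywhere, the two inequalities give $P(S'\text{ loses}\mid\cdot)\le \varepsilon+2\varepsilon=3\varepsilon$.

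The main obstacle is the third paragraph: the substitute reply $r^\star$ must be chosen so that it is simultaneously optimal for $Q$ (so $Q$ is not hurt) and rarely different from what $v$ would have replied (so compatible questions $Q'$ are hurt by at most twice the $Q$\-loss), and one has to see that these demands are compatible with only a factor\-$2$ loss; this is exactly what the XOR\-type win condition together with the compatibility hypothesis buys. This step is also the one most sensitive to the multi\-round structure of the resources, where some care is needed to make ``$v$ discards $R$'' a genuinely local modification that leaves the rest of the strategy — and every question on which $v$'s input differs — verbatim.
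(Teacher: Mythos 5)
There is a genuine gap, and it lies in the very first step: your modified strategy has $v$, on input $x_v$, discard \emph{everything} and answer a fixed bit $r^\star$, whereas the lemma requires $v$ to ignore only the resource $R$ while continuing to use all of its other resources. The difference is not cosmetic. The players in $U$ are by hypothesis not involved in $R$, but nothing prevents $v$ from sharing \emph{other} non-signaling resources with them --- indeed in the paper's application $v=1$, $R=\mathcal{R}_{1,4}\cup\mathcal{R}_{1,5}$, $U=\{2,3\}$, and $v$ still holds $\mathcal{R}_{1,2}$ and $\mathcal{R}_{1,3}$. Consequently your ``decoupling'' claim, that $P(S\text{ loses}\mid Q)$ equals the average of $\ell(g(B,\cdot))$ with $\ell$ computed against the \emph{unconditional} law of $U$'s answers, is false: $v$'s reply can be correlated with $U$'s answers through those other resources. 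Concretely, if $v$ and the single player $u\in U$ share a uniform common bit $c$ via a resource other than $R$, both answer $c$, and the predicate is $a_v+a_u=0$, then $S$ wins $Q$ with probability $1$ while $\min_r\ell(r)=\tfrac12$, so your $S'$ loses $Q$ with probability $\tfrac12$ and the first claimed inequality $P(S'\text{ loses}\mid Q)\le P(S\text{ loses}\mid Q)$ fails outright. The same false independence underlies your lower bound $P(S\text{ loses}\mid Q)\ge\Pr[g(B,\cdot)=\overline{r^\star}\,]\cdot\ell(\overline{r^\star})$ in the second paragraph.

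The repair is exactly the construction the paper displays inside the proof of its first Claim (the paper itself does not reprove this lemma --- it imports it from \cite{test} --- but the technique is spelled out there): parametrize the randomness so that the resources determining $U$'s answers are sampled first (giving $r_1$ and the unique accepting reply $G(r_1)$), then the remaining resources of $v$ other than $R$ (giving $r_2$), and the randomness of $R$ seen by $v$ last (giving $r_3$); this ordering is legitimate precisely because $U$ and the parties of $R$ are disjoint. One then fixes only $r_3$ to its best value $r_3^*$ and keeps $v$'s answer $f(x_v,r_1,r_2,r_3^*,\lambda_v)$ as a function of everything else, which preserves the correlation with $U$ and yields $P(S'\text{ loses}\mid Q)\le P(S\text{ loses}\mid Q)$ by averaging. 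The factor $2$ for compatible $Q'$ then comes not from a parity count but from the triangle-type bound
\[
\Pr\bigl[f(\ldots,r_3)\neq f(\ldots,r_3^*)\bigr]\;\le\;\Pr\bigl[f(\ldots,r_3)\neq G(r_1)\bigr]+\Pr\bigl[f(\ldots,r_3^*)\neq G(r_1)\bigr]\;\le\;2\,P(S\text{ loses}\mid Q),
\]
combined, as you correctly anticipate, with the compatibility hypothesis ensuring that only $v$'s reply can affect the predicate of $Q'$. Your closing remarks about uniqueness and compatibility are pointed in the right direction, but the substitute behaviour must be a fixed sample of $R$'s randomness fed into the unchanged response function, not a fixed final answer.
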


\section{Strategy}	
	
	We will now explain a core notion for the rest of the study, the notion of strategy. We call strategy for players in a specific game, a protocol that describes the behaviour of players in this game. The nature of the resources players can access will define the nature of the strategy. 
	For these results, we consider that players have access to arbitrary k-party multi-round non-signaling resources and cannot communicate. Note that, even though players can generate nonlocal correlations, accessing quantum resources or other non-signaling local resources, they still have a classical behaviour. The notion of nonlocality just comes from the nature of their resources.

	In the following of the paper, we denote by $\bigcup \limits_X \mathcal{R}_{v,X}$ the resources a player can access (can be indexed by an integer $k$) and ${\mathcal{R}_k}_{|v}=a_k^{|v}$ the bit he receives from it. Then, a player $v$ receiving $x_v^0$ as first input for the game, can reiterate the following actions as many times as he wants : \begin{itemize}
	\item Send a bit in one of the resources he can access
	\item Read a bit that the resource outputs
	\item Make unlimited computation with the bits he received and local variables. 
	\end{itemize}
	If we look at the definition of k-party multi-round non-signaling resources, we observe that a resource take as input bit strings from any player that have access to it, and they receive bit strings as outputs, distributed respecting the non-signaling conditions and causality conditions. 
	We can then resume a strategy for a player with this graph : 
	
\includegraphics[scale=0.4]{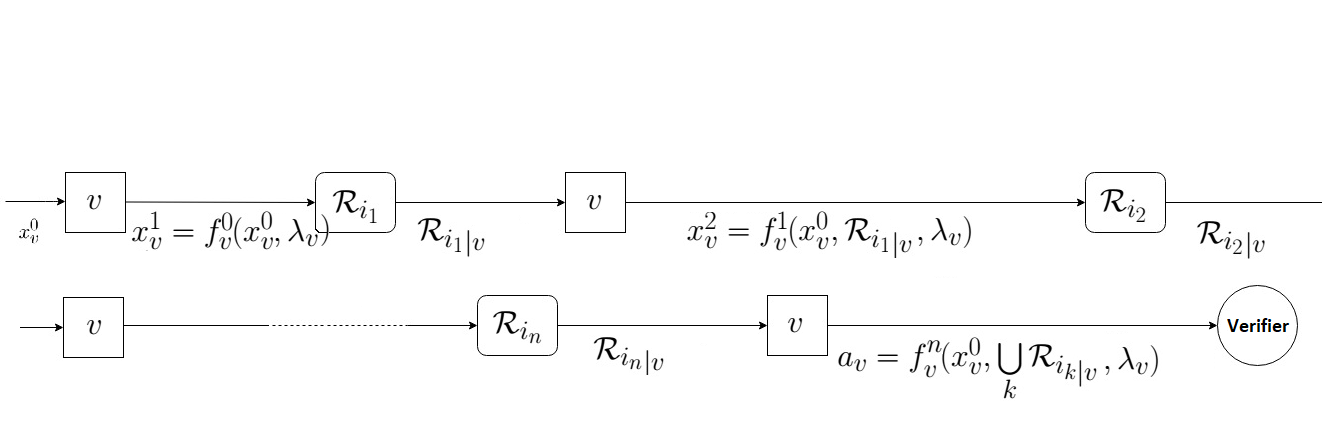} 
	
	We see in the previous graph that the choice of a strategy can be resumed as chosing the resources used, shared and local randomness $\lambda_v$, and functions $ f_v^1,\ldots,f_v^n$, $g_1,\ldots,g_n$ such that $i_k=g_k(x_v^0,\bigcup\limits_{j \leq k} a_j^{|v},\lambda_v).$ In this model, the length of the protocol of a player v and the index of the resources he can access are not fixed beforehand. We can parametrize the length by functions $(s_k)_{k \in \mathbb{N}}$ where the protocol stops at step $q$ if $s_q(x_v^0,\bigcup\limits_{k \leq q} a_k^{|v},\lambda_v)=0$. We will consider that the length of the protocol is always finite for players.
	
	We present now the Five-player cycle graph game which is an example of pseudo-telepathy games using graph states. This is the game used by Pironio and Barrett to separate quantum from two-local theory \cite{barrett}.
	\section{Five-player cycle graph game}
	
	The Five-player cycle graph game is a protocol with players $1,2,3,4$ and $5$. The input and output of player i are denoted respectively $x_i$ and $a_i$, where the indices are taken modulo $5$.
	For this game, we consider the following questions :
		\begin{itemize}
			\item Questions $Q_i=0_{i-2}0_{i-1}1_i0_{i+1}0_{i+2}$ 
			\item Question $Q_a=1_{i-2}1_{i-1}1_i1_{i+1}1_{i+2}$ 
		\end{itemize}
		Then the characterising set associated to this game is $$\mathcal{G}_{C_5}:=\Big\lbrace \Big(\dfrac{1}{6},Q_a,\lbrace 1,2,3,4,5 \rbrace,1\Big), \;\Big(\dfrac{1}{6},Q_i,\lbrace i-1,i,i+1 \rbrace,0\Big)\Big\rbrace$$
	All the questions are chosen with equal probability.\\
	
	This game is "unique" \cite{test}, i.e. if we take a player $i$ involved in the question $Q$, and if we fix responses from the other involved players, there exists only one response for $i$ that makes the verifier accept.\\
	It is a pseudo-telepathy game, it means that there exists a quantum strategy which allows the player to win with probability one \cite{self,pseudo}. We will try to find a gap between this probability, and the probability to win with strategies using non-signaling resources. This will be in two steps. The first two claims find the relation between probabilities to win when all the players use their non-signaling resources and when a set of players don't. The second is to prove that, at a certain point, all the players can't do better than players with only classical resources, even if they still use some non-signaling resources. The reasoning follows the one in \cite{test} but, their hypothesis on the game were not the same, thus we needed to modify and extend their result. 
	So, we prove the following theorem :
	
	\begin{thm}
 In the Five-player cycle graph game, classical players sharing arbitrary two-local non-signaling resources can win with probability at most $1-\dfrac{1}{54}$.
	\end{thm}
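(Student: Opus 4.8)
The plan is to argue by contradiction: suppose a strategy $S$ in which the five classical players share arbitrary two-local non-signaling resources wins the game with probability strictly larger than $1-\frac{1}{54}$. Writing $\ell(Q)=P(S\text{ loses}\mid Q)$ and using that the six questions are equiprobable, this means
\[
\ell(Q_a)+\sum_{i=1}^{5}\ell(Q_i)<\frac{6}{54}=\frac19 .
\]
The first reduction I would make is to symmetrise. The game $\mathcal{G}_{C_5}$ is invariant under the cyclic shift $i\mapsto i+1$ of the player labels, which permutes the $Q_i$ cyclically and fixes $Q_a$; since the family of all pairs of players is closed under this shift, the strategy obtained from $S$ by sampling a uniformly random shift from shared randomness and relabelling the players and their two-local resources accordingly is again a legal two-local strategy with the same overall winning probability. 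Hence I may assume $\ell(Q_i)=\varepsilon$ for every $i$ and $\ell(Q_a)=\varepsilon_a$ with $5\varepsilon+\varepsilon_a<\frac19$, so in particular every per-question loss is below $\frac19$.

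The core of the argument is to \emph{classicalise} $S$ question by question, controlling the extra loss, by repeated use of the Lemma of \cite{test}. The target is a strategy $S'$ in which, on each of the six questions, every player ignores all of its non-signaling resources; such an $S'$ is a convex combination of deterministic local strategies. Recall that the Lemma deletes one resource $R$ involving a player $v$ from $v$'s behaviour on a question $Q$ whose accepting predicate sees only $v$ and players outside $R$, and that this increases the loss only on questions in which $v$ has the same input as in $Q$, and there by at most $2\,\ell(Q)$. For the $Q_i$: the predicate of $Q_i$ depends on the three consecutive players $\{i-1,i,i+1\}$, so from $Q_i$ one may delete, for an involved player $v$, a resource joining $v$ to one of the two non-involved players $\{i+2,i-2\}$. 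I expect the deletions to be organised into two passes --- this is what the two claims announced before the theorem should do --- the first making each resource inert on the questions it can reach and the second mopping up the rest, so that at the end no resource affects any answer on any of the six questions. Since each use of the Lemma multiplies a loss by at most $3$ and the schedule amounts to two nested passes, the accounting is meant to give
\[
\sum_{Q}\ell_{S'}(Q)\;\le\;9\Bigl(\ell(Q_a)+\sum_{i=1}^{5}\ell(Q_i)\Bigr)\;<\;9\cdot\frac19\;=\;1 .
\]

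To close, I would invoke the classical impossibility for this game. In a deterministic local strategy write $\alpha_i,\beta_i\in\{0,1\}$ for player $i$'s answers on inputs $1$ and $0$; the accepting condition of $Q_i$ reads $\beta_{i-1}\oplus\alpha_i\oplus\beta_{i+1}=0$, the $\mathbb{Z}_2$-sum of these five identities is $\bigoplus_{i=1}^{5}\alpha_i=0$, while $Q_a$ demands $\bigoplus_{i=1}^{5}\alpha_i=1$. Hence no deterministic local strategy wins all six questions, so any convex combination of them --- in particular $S'$ --- loses at least one of the six equiprobable questions, i.e. $\sum_{Q}\ell_{S'}(Q)\ge 1$. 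This contradicts the inequality of the previous paragraph and proves the theorem.

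The point I expect to be most delicate, and the place where \cite{test} has to be genuinely extended rather than quoted, is the classicalising step. The hypotheses of their Lemma fail verbatim here because the all-ones question $Q_a$ gives player $i$ the very same input $1$ as $Q_i$ while the predicate of $Q_a$ depends on every player, so a careless deletion on $Q_i$ would inflate $\ell(Q_a)$ and a deletion directly on $Q_a$ is not licensed at all. The two claims must therefore fix a careful order of deletions --- which player, which resource, on which question, and treating $Q_a$ only after all the $Q_i$ have been made local --- re-derive the loss-inflation estimates in that modified setting, and make the total blow-up close at exactly the factor $9$; any larger factor would yield a weaker constant than $\tfrac{1}{54}$.
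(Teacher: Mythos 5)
Your skeleton --- bound the per-question losses, transform $S$ into a restricted strategy while inflating losses by a controlled factor, then invoke a value bound for the restricted class --- matches the paper, and you correctly flag both the factor $9$ and the fact that $Q_a$ is the obstruction to a naive application of the deletion lemma. But the middle step as you describe it cannot be carried out, and the paper does not attempt it. You propose to fully classicalise $S$, ending with a convex combination of deterministic local strategies, and then finish with the parity argument. The deletion lemma, however, only removes a resource $R$ from player $v$'s behaviour on a question $Q$ whose predicate depends solely on $v$ and on players \emph{not} involved in $R$. For a resource joining two adjacent players $i$ and $i+1$, every question in which $i$ has input $1$ (namely $Q_i$ and $Q_a$) has a predicate that also depends on $i+1$'s answer, so no ordering of deletions --- in particular not ``treating $Q_a$ last'' --- ever licenses removing that resource from $i$'s behaviour on input $1$. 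Full classicalisation is therefore out of reach by this method, and your concluding parity argument, which applies only to convex combinations of deterministic local strategies, does not apply to any strategy you can actually construct.

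What the paper does instead is strictly weaker and then compensates: it removes the resources of only three consecutive players (a vertex and its two neighbours) and only on input $0$, averages over the ten symmetric choices to get a uniform loss of at most $9\varepsilon$ per question $Q_i$, and then proves a separate claim that this \emph{partially} restricted model still has value exactly $5/6$. That last claim is the real replacement for your parity step: the two untouched adjacent players retain a genuine two-party non-signaling resource, so their conditional output distribution is a convex combination of local behaviour and PR-box behaviour, and one must check the PR-box extremal points separately (they lose with probability at least $1/6$ because two of the six questions involve exactly one of the two remaining players, so a correlated random answer is wrong half the time there). Without this analysis the argument does not close; with it, $9\varepsilon\ge 1/6$ gives the stated $1/54$. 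Your factor-$9$ accounting on the sum of losses is also only asserted, whereas the paper's explicit tables (losses up to $17\varepsilon$ on individual questions before averaging) show the bookkeeping is not symmetric question by question and genuinely needs the averaging step to come down to $9\varepsilon$.
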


	\vspace{0,5cm}
	\begin{proof}
	
	\begin{clm}
	Let S be a strategy, for classical players using non-signaling resources, that wins with probability at least $1-\varepsilon$ on all questions.
	
	Then there exists a strategy $S_{\lbrace 1,5,2 \rbrace}$ that wins with probability at least $1-\varepsilon$ to $Q_a$, $1-5\varepsilon$ to $Q_2$, $Q_4$ and $Q_5$, $1-13\varepsilon$ to $Q_3$ and $1-17\varepsilon$ to $Q_1$, where on input 0, players 1, 5 and 2 don't use their non-signaling resources. 
	\end{clm}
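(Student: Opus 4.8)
The plan is to turn $S$ into $S_{\{1,5,2\}}$ by a finite sequence of elementary modifications, each of which makes one of the players $1,5,2$ ignore one of its two-local non-signaling resources on input $0$, while keeping, question by question, an upper bound on the losing probability $P(\cdot\mid Q)$.

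The elementary modification is the one behind the Lemma of \cite{test}. Given a player $v\in\{1,5,2\}$, a resource $R$ it shares with a single partner $w$, and an anchor question $Q$ with $x_v=0$ whose acceptance predicate depends only on $v$'s answer and on the answers of players other than $w$ --- for a two-local $R$ this just asks that $w$ not be one of the three players involved in $Q$ --- one lets $v$ replace the bit it draws from $R$ on input $0$ by its best constant value. On $Q$ this does not increase $P(\cdot\mid Q)$, and on every other question $Q'$ with $x_v=0$ that is $v$-compatible with $Q$ it increases $P(\cdot\mid Q')$ by at most $2P(\cdot\mid Q)$. I would also exploit two-locality as a free simplification: in a question in which $R_{v,w}$ is touched by only one of $v,w$ it is merely private randomness for that party, so in any star question where the partner has input $0$ and has already been cleared, $v$'s use of $R_{v,w}$ can be dropped at no cost --- which disposes of most star questions once the three players are cleared in a suitable order.

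The difficulty is that in this game the anchor hypothesis can never be met together with the $v$-compatibility hypothesis: around the cycle the star questions $Q_i$ and $Q_{i+1}$ always both involve the pair $\{i,i+1\}$ with their two inputs interchanged, so whichever star question one picks as anchor some neighbouring star question fails to be $v$-compatible with it --- and for certain resources even the anchor hypothesis itself fails, since the partner is among the three involved players. I would therefore prove a strengthened elementary modification in which the extra losing probability handed to a troublesome $Q'$ is bounded by $2P(\cdot\mid Q)$ plus twice the losing probability of the one or two further star questions through which $v$'s altered input-$0$ behaviour can leak; the non-signaling constraints are what identify these questions, and this is what replaces the nominal factor $2$ by the larger totals.

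With this the proof reduces to ordering the removals and summing the penalties. One clears the resources of players $1,5,2$ by peeling the path $5$--$1$--$2$ from its two ends inward, invoking the two-locality shortcut whenever a partner has already been cleared, in an order chosen so that no anchor has accumulated any penalty before it is used; then every step leaves the anchor's losing probability unchanged, passes a bounded penalty only to questions sharing the relevant input, and never affects $Q_a$ (all of whose inputs are $1$), so $P(\cdot\mid Q_a)$ stays $\varepsilon$. Summing: $Q_2$, $Q_4$, $Q_5$ collect the penalties of two families of moves and reach $5\varepsilon$; $Q_3$ in addition acts as a leakage question in several of the strengthened moves and reaches $13\varepsilon$; and $Q_1$, which has $x_5=x_2=0$ and is therefore both a companion and a leakage question for almost every move touching players $5$ and $2$, reaches $17\varepsilon$. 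The main obstacle is exactly the strengthened modification and this scheduling: one must determine which star questions $v$'s altered behaviour can influence and keep the leakage under control so that the accumulated bounds remain additive and deposit the right multiplicities on $Q_1$ and $Q_3$ instead of blowing up.
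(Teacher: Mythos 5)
There is a genuine gap, and it comes in two parts. First, you misdiagnose the applicability of the lemma from \cite{test}. Its hypotheses \emph{can} be met in this game: for a player $v$ involved with input $0$ in the star question $Q_i$, one removes the pair of resources that $v$ shares with the two players \emph{not} involved in $Q_i$ (e.g.\ for $v=1$ anchored at $Q_2$, remove $\mathcal{R}_{1,4}\cup\mathcal{R}_{1,5}$). Then the anchor hypothesis holds, the single other star question in which $v$ is involved with input $0$ (here $Q_5$) \emph{is} $v$-compatible with the anchor, and the star questions that fail compatibility are exactly those in which $v$ is not involved ($Q_3$, $Q_4$); for those, non-signaling makes $v$'s modified input-$0$ behaviour invisible to the verifier, so their losing probability is unchanged. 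That last observation is the only extension of the lemma the paper needs. Your ``strengthened elementary modification with leakage terms'' is therefore both unnecessary and unproved --- you never establish the claimed bound, you never identify which questions the leakage goes to, and nothing in your sketch forces the specific multiplicities you assert at the end.

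Second, you omit the device that actually produces the constants: averaging over the two orders in which a player's two resource pairs are removed. A single order gives asymmetric bounds ($7\varepsilon$ on one anchor, $3\varepsilon$ on the other after the player-$1$ stage); only the uniform mixture of the two orders yields $5\varepsilon$ on both, and the final $17\varepsilon$ on $Q_1$ and $13\varepsilon$ on $Q_3$ arise as the averages $(27\varepsilon+7\varepsilon)/2$ and $(11\varepsilon+15\varepsilon)/2$ of the two orders in the player-$2$ stage. Your proposed scheduling ``so that no anchor has accumulated any penalty before it is used'' is unachievable: each of the three players needs two anchors (one per resource pair), the second anchor necessarily carries the $+2\times$ penalty inflicted by the first removal, and by the time player $2$ is treated its anchor $Q_1$ already sits at $5\varepsilon$ from the player-$5$ stage. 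As written, your final numbers are asserted rather than derived, and the mechanism you describe would not produce them.
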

	 
	 \begin{proof}Let S be a strategy that wins with at least $1-\varepsilon$ on all questions, for classical players using two-party non-signaling resources.  \label{notation}
	
	We now consider the player $1$, involved in  non-signaling resources $\mathcal{R}_{1,i} \;, \forall i \in \lbrace 2,3,4,5 \rbrace$. Let \bm{$Q=Q_2$}. We will  prove that there exists a strategy $S^{'}$ for these players that win with a least $1- 3\varepsilon$ on all questions.
	
	First, we need to introduce the notion of randomness of a non-signaling resource. The goal of this analysis is, given a game, a strategy, and a question, to calcul the probability to win for the players. We saw in the introduction that each strategy induced a probability distribution for the outputs of the players. Thus, given a question Q and a strategy S, we have that $$P(S \; wins |Q):=\sum\limits_{(a_1,\ldots,a_n)\; winning} p(a_1,\ldots,a_n | x_1,\ldots,x_n).$$ Here, we use a tool to have a factorisation of this probability, to have it in the following way : $$P(S \; wins |Q):=\sum\limits_{(a_1,\ldots,a_n)\; winning} p(a_1 |Q)p(a_2|Q,a_1)\ldots p(a_n|Q,a_1,\ldots,a_{n-1}).$$
	In the analysis, to calcul this probability to win, we can parametrize the randomness of the non-signaling resources taking a specific order \cite{test}. 
    To do this, the analysis consists in chosing an order in players. Then, given a player $v$ and bits he receives from resources he can access $\bigcup\limits_k {\mathcal{R}_{i_k}}_{|v}$, we look at the outputs he received from it. In the general case, when you observe first the outputs of his resources that come to him, they follow a certain distribution. Then, in the analysis, we parametrize it by a random variable, that would describe the outputs of this resource in this setting, and we sum over all the possible realisation of this randomness. We do it for all the resources for a player, and the realisation of this random variable will determine the final output of this player to the game. Here, we parametrize the randomness of resources of player $\lbrace 2,3 \rbrace$ first, then $1$, and finally $\lbrace 4,5 \rbrace$.
	\begin{itemize}
		\item Let $r_1=\bigcup\limits_{x \neq 2, y\neq 3}r_{2,x_{|2}}\cup r_{3,y_{|3}}$ denote the randomness for $\lbrace 2,3 \rbrace$. It fixes the answers of the players $2$ and $3$ to $Q$. Let $G(r_{1})$ denote the unique answer for $1$ for which the verifier accepts.
		\item Let $r_2=r_{1,2_{|1}}\cup r_{1,3_{|1}}$ and $r_3=r_{1,4_{|1}}\cup r_{1,5_{|1}}$ denote the remaining randomness needed to determine the outputs at $1$ of any resources involving $1$. Thus the answer of player $1$ is a deterministic function of $(x_1^0,r_1,r_2,r_3)$ and local variables $\lambda_1$
	\end{itemize}
	
	By assumption on strategy $S$ we have the following inequality with the above parametrization : 
	$$1- \varepsilon \leq \sum \limits_{r_1,r_2,r_3}p(r_1)p(r_2)p(r_3) \cdot \chi_{G(r_1)=f_1^n(x_1^0,r_1,r_2,r_3,\lambda_1)}.$$
	
	In this analysis, we see that the randomness of resources $\mathcal{R}_{1,4}$ and $\mathcal{R}_{1,5}$ is observed last. In particular, a way to block the use of this resource for player 1 would be to give him a fixed random variable $r_3^{*}$, which distribution suits in the case of question $Q$ and this order of analysis. 
	
	In this setting, the probability to win the game given question $Q$ and with player $1$ not using resources with players $\lbrace 4,5 \rbrace$ satisfies  $$1- \varepsilon \leq \sum \limits_{r_1,r_2,r_3^{*}}p(r_1)p(r_2)p(r_3^{*})\cdot \chi_{G(r_1)=f_1^n(x_1^0,r_1,r_2,r_3^*,\lambda_1)}.$$
	
	We define the following strategy $S^{'}$ that is the same as $S$ except when $x_1=0$. In this case, the player $1$ doesn't use any resource with $\lbrace 4,5 \rbrace$ (\bm{$\mathcal{R}^1=\mathcal{R}_{1,4}  \cup \mathcal{R}_{1,5} $}) and simulates its behaviour with the fixed $r_3^*$. It means that, in the strategy described as above, everytime the player $1$ would have used resources with $4$ or $5$, he will simulate the outputs of these boxes with this random distribution instead. 
	
	We will analyse question by question the probability to win for players using this strategy : \begin{itemize}
		\item For $Q_2$ : The inequality above holds for strategy $S^{'}$ by definition, thus $$P(S^{'} \; loses | Q_2)\leq\varepsilon.$$
		\item For $Q_1$ and $Q_a$ : $x_1=1$ thus the player $1$ will have the same behaviour as in strategy $S$, such as every other players, therefore $$P(S^{'} \; loses | Q_1)=P(S^{'} \; loses | Q_6)\leq \varepsilon.$$
		\item For $Q_5$, we can apply the Lemma 6 \cite{test} : The game we play is a unique nonlocal game. We have S a strategy that wins with probability at least $1- \varepsilon$. Here, we took player $1$, and the non-signaling resource \bm{$\mathcal{R}^1=\mathcal{R}_{1,4}  \cup \mathcal{R}_{1,5} $} involving $1$. The question $Q_2$ is such that the verifier's acceptance predicate depends only on the responses of $1$ and $\lbrace 2,3 \rbrace$ not involved in $\mathcal{R}^1$. 
Here, the question $Q_5$ is 1-compatible with $Q_2$  : indeed, the input of player 1 is the same, and even if inputs of player $2$ and $3$ are not the same, they are not involved in $Q_5$. The lemma tells us that for $Q^{'}$, a 1-compatible question with $Q_2$, we have the following inequality  $$P(S^{'} \; loses|Q^{'}) \leq P(S \; loses|Q^{'}) + 2 P(S \; loses|Q_2) \leq 3\varepsilon$$ thus, in this case we obtain that $$P(S^{'} \; loses|Q_5) \leq P(S \; loses|Q_5) + 2 P(S \; loses|Q_2) \leq 3\varepsilon.$$
		
		\item For $Q_3$ (A similar argument can be used for $Q_4$), we can't apply the lemma used for $Q_5$, so we need to fin an other argument. However, we notice that in $Q_3$, the verifier's acceptance predicate doesn't depend on $1$'s response, so intuitively, we would say that changing the behaviour for player $1$ only won't change the probability to win on this question.
		
		\hspace{0,5cm} Indeed, if we analyze differently strategy $S$, parameterizing the randomness for the non-signaling resources according to the players $\lbrace 2,3 \rbrace$ going first, then $4$, then $\lbrace 1,5 \rbrace$. Let $r_4=\bigcup\limits_{x \neq 4} r_{4,x_{|4}}$ the remaining randomness needed to determine the outputs of $4$ to $Q_3$, its response will be a deterministic function $f_4^n(x_4^0,r_1,r_4,\lambda_4)$. And by assumption on strategy $S$ we have for question $Q_3$ : 
		$$ 1- \varepsilon \leq \sum \limits_{r_1,r_4}p(r_1) p(r_4)\cdot \chi_{G(r_1)=f_4^n(x_4^0,r_1,r_4,\lambda_4)}.$$
		The right-hand side of this equation is exactly the probability that $S'$ wins for $Q_3$.
		Similarly for $Q_4$, we obtain $$P(S' \; loses | Q_3)=P(S' \; loses | Q_4)\leq \varepsilon.$$
		
	\end{itemize} 
	 
	\vspace{1cm}
	
	We will now apply a similar reasoning but considering player $1$, and \bm{$Q=Q_5$}. And we define a strategy $S^{''}$ which is the same as $S^{'}$ except if $x_1=0$ ; in this case, the player $1$ doesn't use resource \bm{$\mathcal{R}^2=\mathcal{R}_{1,2}\bigcup \mathcal{R}_{1,3}$}.
	
	We obtain the following inequalities : \begin{itemize}
		\item \bm{$P(S^{''} \; loses | Q_5) \leq P(S^{'} \; loses | Q_5)\leq 3\varepsilon.$}
		\item $P(S^{''} \; loses | Q_1)=P(S^{''} \; loses | Q_a)\leq \varepsilon.$
		\item $P(S^{''} \; loses | Q_3)=P(S^{''} \; loses | Q_4)\leq \varepsilon.$
		\item \bm{$P(S^{''} \; loses | Q_2) \leq P(S^{'} \; loses | Q_2) + 2 P(S^{'} \; loses | Q_5)\leq 7 \varepsilon$}
	\end{itemize}
	
 	Finally, we obtained a strategy $S^{''}$ which wins with at least probability $1-7\varepsilon$ on question $Q_2$, $1-3\varepsilon$ on question $Q_5$ and $1- \varepsilon$ on all other questions.
 	
 	But these probabilities depend on the order we removed different resources to player $1$. We could define a symetric strategy $S^{'''}$ in which we first removed $\mathcal{R}^2=\mathcal{R}_{1,2}\bigcup \mathcal{R}_{1,3}$ and then $\mathcal{R}^1=\mathcal{R}_{1,4}  \cup \mathcal{R}_{1,5} $. This strategy would have resulted to inverted probabilities for $Q_2$ and $Q_5$, but similar ones on other questions.
 		
 	It seems natural now to consider the average strategy $S_{\lbrace 1 \rbrace}$ which consists in chosing a uniformly random strategy between $S^{''}$ and $S^{'''}$. This strategy consists in giving to the player 1 a random variable at the beginning, uniformly distributed in $\llbracket 0,1 \rrbracket$ and he will use the fixed randomness to choose his behaviour according to the value of this random variable. 
 	
 	\vspace{0,5cm}
 	To resume, we obtain the following probability table : 
 	
 	\vspace{0,3cm}

	 \begin{center}
	 	\begin{tabular}{|c|c|c|c|c|c|c|}
	 		\hline 
	 		\multicolumn{7}{|c|}{P($S$ loses $\mid$ $Q$) $\leq$} \\ 
	 		\hline 
	 		\rule[-1ex]{0pt}{2.5ex}\backslashbox{Strategy $S$}{\vrule width 0pt height 1.25em Question $Q$} & $Q_1$& $Q_2$ & $Q_3$ & $Q_4$ & $Q_5$ & $Q_a$ \\ 
	 		\hline 
	 		\rule[-1ex]{0pt}{2.5ex}	$S^{''}$& $\varepsilon $&  $7\varepsilon$& $\varepsilon $& $ \varepsilon$&  $3\varepsilon$& $\varepsilon $ \\ 
	 		\hline 
	 		\rule[-1ex]{0pt}{2.5ex}$S^{'''}$& $\varepsilon $&  $3\varepsilon$& $\varepsilon $& $ \varepsilon$&  $7\varepsilon$& $\varepsilon $  \\ 
	 		\hline 
	 		\rule[-1ex]{0pt}{4ex}	$S_{\lbrace 1 \rbrace}$& $\varepsilon $&  $5\varepsilon$& $\varepsilon $& $ \varepsilon$&  $5\varepsilon$& $\varepsilon $ \\ 
	 		\hline 
	 	\end{tabular} 
	 \end{center}

The strategy obtained $S_{\lbrace 1 \rbrace}$ is the same as $S$ except when the player $1$ has input $x_1=0$, he acts like a classical player. And, we obtained that the strategy $S_{\lbrace 1 \rbrace}$ loses with probability at most $5 \varepsilon$ on questions $Q_2$ and $Q_5$. Thus, we will continue this process, defining a final strategy in which on input $0$, some players don't use their resources. For this new game, we will obtain a bound for the probability to lose, which will allow us to have a bound for the initial strategy $S$.

If we consider now the player $5$ and the question $Q=Q_1$, we will apply exactly the same reasoning, and find a strategy $S_{\lbrace 1 \rbrace}^{'}$ that is the same as $S_{\lbrace 1 \rbrace}$ except when $x_5=0$. In this case, the player 5 doesn't use any resource with $\lbrace 3,4 \rbrace$.
We obtain the following inequalities : \begin{itemize}
		\item \bm{$P(S_{\lbrace 1 \rbrace}^{'} \; loses | Q_1) \leq P(S_{\lbrace 1 \rbrace} \; loses | Q_1)\leq \varepsilon.$}
		\item \bm{$P(S_{\lbrace 1 \rbrace}^{'} \; loses | Q_4) \leq P(S_{\lbrace 1 \rbrace} \; loses | Q_4) + 2 P(S_{\lbrace 1 \rbrace} \; loses | Q_1)\leq 3\varepsilon$}
		\item $P(S_{\lbrace 1 \rbrace}^{'} \; loses | Q_2)=P(S_{\lbrace 1 \rbrace}^{'} \; loses | Q_5)\leq 5\varepsilon.$
		\item $P(S_{\lbrace 1 \rbrace}^{'} \; loses | Q_3)=P(S_{\lbrace 1 \rbrace}^{'} \; loses | Q_3)\leq \varepsilon.$

	\end{itemize}
Doing the same with player 5 and question $Q_4$, we obtain the strategy $S_{\lbrace 1 \rbrace}^{''}$ that verifies these inequalities : 
\begin{itemize}
		\item \bm{$P(S_{\lbrace 1 \rbrace}^{''} \; loses | Q_4) \leq P(S_{\lbrace 1 \rbrace}^{'} \; loses | Q_4)\leq 3\varepsilon.$}
		\item \bm{$P(S_{\lbrace 1 \rbrace}^{''} \; loses | Q_1) \leq P(S_{\lbrace 1 \rbrace}^{'} \; loses | Q_1) + 2 P(S_{\lbrace 1 \rbrace}^{'} \; loses | Q_4)\leq 7\varepsilon$}
		\item $P(S_{\lbrace 1 \rbrace}^{''} \; loses | Q_2)=P(S_{\lbrace 1 \rbrace}^{'} \; loses | Q_5)\leq 5\varepsilon.$
		\item $P(S_{\lbrace 1 \rbrace}^{''} \; loses | Q_3)=P(S_{\lbrace 1 \rbrace}^{'} \; loses | Q_3)\leq \varepsilon.$
		
	\end{itemize}

Once again, if we take the average strategy, we obtain the following table of probability, where $S_{\lbrace 1,5 \rbrace}$ is the same strategy as $S$ except when $x_1=0$, player $1$ doesn't use any of his resources, same for player $5$.

\begin{center}
	 	\begin{tabular}{|c|c|c|c|c|c|c|}
	 		\hline 
	 		\multicolumn{7}{|c|}{P($S$ loses $\mid$ $Q$) $\leq$} \\ 
	 		\hline 
	 		\rule[-1ex]{0pt}{2.5ex}\backslashbox{Strategy $S$}{\vrule width 0pt height 1.25em Question $Q$} & $Q_1$& $Q_2$ & $Q_3$ & $Q_4$ & $Q_5$ & $Q_a$ \\ 
	 		\hline 
	 		\rule[-1ex]{0pt}{2.5ex}	$S_{\lbrace 1 \rbrace}^{''}$& $7\varepsilon $&  $5\varepsilon$& $\varepsilon $& $ 3\varepsilon$&  $5\varepsilon$& $\varepsilon $ \\ 
	 		\hline 
	 		\rule[-1ex]{0pt}{2.5ex}$S_{\lbrace 1 \rbrace}^{'''}$& $3\varepsilon $&  $5\varepsilon$& $\varepsilon $& $ 7\varepsilon$&  $5\varepsilon$& $\varepsilon $  \\ 
	 		\hline 
	 		\rule[-1ex]{0pt}{4ex}	$S_{\lbrace 1,5 \rbrace}$& $5\varepsilon $&  $5\varepsilon$& $\varepsilon $& $ 5\varepsilon$&  $5\varepsilon$& $\varepsilon $ \\ 
	 		\hline 
	 	\end{tabular} 
	 \end{center}

At the end, we obtain this final table, of strategy $S_{\lbrace 1,5,2 \rbrace}$ in which players $\lbrace 1,2,5 \rbrace$ don't use their resources on input 0. \begin{center}
	 	\begin{tabular}{|c|c|c|c|c|c|c|}
	 		\hline 
	 		\multicolumn{7}{|c|}{P($S$ loses $\mid$ $Q$) $\leq$} \\ 
	 		\hline 
	 		\rule[-1ex]{0pt}{2.5ex}\backslashbox{Strategy $S$}{\vrule width 0pt height 1.25em Question $Q$} & $Q_1$& $Q_2$ & $Q_3$ & $Q_4$ & $Q_5$ & $Q_a$ \\ 
	 		\hline 
	 		\rule[-1ex]{0pt}{2.5ex}	$S_{\lbrace 1,5 \rbrace}^{''}$& $27\varepsilon $&  $5\varepsilon$& $11\varepsilon $& $5\varepsilon$&  $5\varepsilon$& $\varepsilon $ \\ 
	 		\hline 
	 		\rule[-1ex]{0pt}{2.5ex}$S_{\lbrace 1,5 \rbrace}^{'''}$& $7\varepsilon $&  $5\varepsilon$& $15\varepsilon $& $ 5\varepsilon$&  $5\varepsilon$& $\varepsilon $  \\ 
	 		\hline 
	 		\rule[-1ex]{0pt}{3ex}	$S_{\lbrace 1,5,2 \rbrace}$& $17\varepsilon $&  $5\varepsilon$& $13\varepsilon $& $ 5\varepsilon$&  $5\varepsilon$& $\varepsilon $ \\ 
	 		\hline 
	 	\end{tabular} 
	 \end{center}
\end{proof}

\begin{clm}
Let S be a strategy, for classical players using non-signaling resources, that wins with probability at least $1-\varepsilon$ on all questions.

There exists a strategy $\overline{S}$ that wins with probability at least $1-9\varepsilon$ on all questions, that is the same as S except that players share randomness that determine $i$, such that players $i$, $i+1$ and $i-1$ don't use their non-signaling resources on input $0$.
\end{clm}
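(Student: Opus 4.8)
The plan is to combine the preceding claim with the cyclic symmetry of the game and then average over rotations. First I would note that the characterising set $\mathcal{G}_{C_5}$ is invariant under the player relabeling $v \mapsto v+1 \pmod 5$: this permutation fixes $Q_a$ (together with its involved set and target parity) and sends $Q_i \mapsto Q_{i+1}$ along with the corresponding involved set $\{i-1,i,i+1\}$ and its parity. Consequently the preceding claim, established there for the center $1$ (players $\{5,1,2\}$), applies verbatim with every center $i \in \{1,\dots,5\}$: starting from $S$ one obtains a strategy $S^{(i)}$, identical to $S$ except that players $i-1,i,i+1$ ignore their non-signaling resources on input $0$, satisfying
$$P(S^{(i)}\text{ loses}\mid Q_{i})\le 17\varepsilon,\quad P(S^{(i)}\text{ loses}\mid Q_{i+1})\le 5\varepsilon,\quad P(S^{(i)}\text{ loses}\mid Q_{i+2})\le 13\varepsilon,$$
$$P(S^{(i)}\text{ loses}\mid Q_{i-1})\le 5\varepsilon,\quad P(S^{(i)}\text{ loses}\mid Q_{i-2})\le 5\varepsilon,\quad P(S^{(i)}\text{ loses}\mid Q_{a})\le \varepsilon,$$
all indices read modulo $5$.

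Next I would let $\overline{S}$ be the strategy in which, before the game starts, the players use shared randomness to draw $i$ uniformly from $\{1,\dots,5\}$ and then play $S^{(i)}$. This stays inside the considered class, since shared randomness is a (trivial) two-local non-signaling resource, and $\overline{S}$ inherits from each $S^{(i)}$ the property that players $i-1,i,i+1$ do not touch their non-signaling resources on input $0$, which is exactly the structure required in the statement. Moreover, the loss probability of $\overline{S}$ on any fixed question is the uniform average over $i$ of the loss probabilities of $S^{(i)}$ on that question.

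It then remains to do the bookkeeping. Fix a question $Q_j$. As $i$ runs over the five centers $\{j,\,j-1,\,j-2,\,j+1,\,j+2\}=\{1,\dots,5\}$, the question $Q_j$ is successively $Q_i$ (for $i=j$), $Q_{i+1}$ (for $i=j-1$), $Q_{i+2}$ (for $i=j-2$), $Q_{i-1}$ (for $i=j+1$) and $Q_{i-2}$ (for $i=j+2$), so exactly one of the bounds $17\varepsilon,\,5\varepsilon,\,13\varepsilon,\,5\varepsilon,\,5\varepsilon$ is contributed by each center, whence
$$P(\overline{S}\text{ loses}\mid Q_j)\ \le\ \frac{(17+5+13+5+5)\,\varepsilon}{5}\ =\ 9\varepsilon .$$
For the all-ones question, $P(\overline{S}\text{ loses}\mid Q_a)\le \tfrac{1}{5}\cdot 5\varepsilon = \varepsilon \le 9\varepsilon$. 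Hence $\overline{S}$ wins with probability at least $1-9\varepsilon$ on every question, which is the claim.

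I expect the only point really needing care to be the cyclic-symmetry step: one must check that the particular strategy surgeries performed in the preceding claim (which single out player $1$ and a fixed order in which the resources $\mathcal{R}^1$ and $\mathcal{R}^2$ were removed) genuinely transport along the rotation automorphism of $\mathcal{G}_{C_5}$, and that prepending shared randomness to $\{S^{(1)},\dots,S^{(5)}\}$ does not leave the class of classical strategies with two-local non-signaling resources. Once these are granted, the averaging and the arithmetic $45\varepsilon/5 = 9\varepsilon$ are immediate.
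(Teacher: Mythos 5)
Your proposal is correct and follows essentially the same route as the paper: apply the preceding claim to each of the five cyclic rotations and average with shared randomness, the bookkeeping $(17+5+13+5+5)\varepsilon/5=9\varepsilon$ being exactly the paper's computation. The only cosmetic difference is that the paper averages over $5\times 2$ choices (also including the two mirror-image orders of resource removal) to make the resulting table fully symmetric, but since the five relative bounds sum to $45\varepsilon$ either way, your five-fold average already yields the same $9\varepsilon$ bound.
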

\begin{proof}
The strategy $S_{\lbrace 1,5,2 \rbrace}$ obtained in previous claim defines new rules for the game. We said that players $1,5$ and $2$ couldn't use their non-signaling resources when their input was $0$. 
We could have obtained an other game strategy, with different probabilities to lose defining that $3$ other players couldn't use their non-signaling resources on input $0$. 

Thus, we can define the final strategy $\overline{S}$ in which players share random variable distributed such that a player $i$ is chosen randomly and then, these players $i$, $i+1$ and $i-1$ don't use their non-signaling resources on input $0$. There are $5 \times 2$ possibilities to remove one player and its two neighbors. By symetry on questions and players, we obtain the following table of probability for the strategy $\overline{S}$ :

\begin{center}

\begin{tabular}{|c|c|c|c|c|c|c|}
	 		\hline 
	 		\multicolumn{7}{|c|}{\rule[-1ex]{0pt}{3.5ex}P($\overline{S}$ loses $\mid$ $Q$) $\leq$} \\ 
	 		\hline 
	 		\rule[-1ex]{0pt}{2.5ex}\backslashbox{Strategy $S$}{\vrule width 0pt height 1.25em Question $Q$} & $Q_1$& $Q_2$ & $Q_3$ & $Q_4$ & $Q_5$ & $Q_a$ \\ 
	 		\hline 
	 		\rule[-1ex]{0pt}{4ex}	$\overline{S}$&  $9\varepsilon $&  $9\varepsilon$& $9\varepsilon $& $9\varepsilon$&  $9\varepsilon$& $\varepsilon $ \\ 
	 		\hline
	 	\end{tabular}
	 \end{center}
\end{proof}
For the next claim, we just define the notions of non-signaling values and classical values \cite{XOR}. The non-signaling value is the best reachable probability to win for players sharing non-signaling resources. The classical value is the same for players sharing classical correlations. 

\begin{clm}
In the five-player cycle graph game when $3$ players, representing one vertex and its neighbors, ignore their non-signaling resources on input $0$, the two-local non-signaling value is $\dfrac{5}{6}$, the same as the classical value.
\end{clm}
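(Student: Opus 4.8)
The plan is to prove the chain $\text{(classical value)}\le\text{(two-local value)}\le\tfrac56\le\text{(classical value)}$. The first inequality is immediate: a classical strategy uses no non-signaling resource, hence is in particular an admissible strategy in the restricted game. The last is the pentagon bound: adding the five predicates $a_{i-1}+a_i+a_{i+1}=0$ over $\mathbb{F}_2$ forces $\sum_i a_i(1)=0$, which is incompatible with the $Q_a$ predicate $\sum_i a_i(1)=1$, so no classical strategy wins all six questions, while the constant strategy $a_i\equiv 0$ wins exactly five; hence the classical value is $\tfrac56$. It remains to prove that the two-local value is at most $\tfrac56$.

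For that, I would first use that the map sending a strategy to its vector $(P(\text{win}\mid Q))_Q$ is affine in the shared and local randomness, so the optimum is attained with these deterministic. By hypothesis the three restricted players --- WLOG players $1,2,5$, a vertex and its two neighbours --- then answer fixed bits $\beta_1,\beta_2,\beta_5$ on input $0$, and the only remaining randomness is internal to the two-party resources. Substituting these constants, the six winning predicates become $\mathbb{F}_2$-affine equations in the seven random answer-variables $a_1^{(1)},\dots,a_5^{(1)},a_3^{(0)},a_4^{(0)}$ (writing $a_i^{(b)}$ for player $i$'s answer on input $b$), and the sum of the six left-hand sides is identically $0$ while the sum of the right-hand sides is $1$: the six predicates are jointly unsatisfiable. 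The heart of the proof is then to upgrade this $\mathbb{F}_2$-contradiction to the probabilistic statement $\sum_Q P(\text{lose}\mid Q)\ge 1$, by exhibiting a single probability space carrying variables $\hat a_1^{(1)},\dots,\hat a_5^{(1)},\hat a_3^{(0)},\hat a_4^{(0)}$ whose joint law reproduces, for every question $Q$, the strategy's joint law of precisely the answer-variables occurring in the predicate of $Q$. Once such a space exists, at least one of the six predicates fails at every sample point, so $1=P(\text{some predicate fails})\le\sum_Q P(\text{lose}\mid Q)$, i.e. the average winning probability is at most $\tfrac56$.

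The construction of that space rests on two structural facts, valid once the randomness is fixed: distinct two-party resources are mutually independent, and --- by the non-signaling condition of Definition 1 applied to each resource separately --- the outputs a resource delivers to a given player have a distribution depending only on that player's own inputs to it. Consequently, in any scenario the correlation between two players' answers is carried solely by the unique resource they share, and a player's answer on a fixed input can be taken to be a single random variable, coupled consistently to each of the finitely many other answers with which it must be correlated (for instance $a_2^{(1)}$ is linked to $a_3^{(0)}$ through $\mathcal R_{2,3}$ in the $Q_2$-scenario and to $a_4^{(1)}$ through $\mathcal R_{2,4}$ in the $Q_a$-scenario, and these are independent channels). Because the three restricted players contribute genuine constants, the only correlations to reconcile across the six scenarios are the pairwise ones through $\mathcal R_{2,3}$, $\mathcal R_{3,4}$, $\mathcal R_{4,5}$, together with the all-inputs-$1$ joint law of $Q_a$; since each such correlation is produced by its own independent bipartite resource, they glue into a single joint law, which is the required space.

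The step I expect to be the main obstacle is exactly this coupling. The incidence pattern of the six predicates is cyclic --- the triangles $Q_a,Q_2,Q_4$ and $Q_a,Q_3,Q_5$ share the variables $a_3^{(0)}$ and $a_4^{(0)}$ --- so one cannot build the space by gluing prescribed marginals along a tree; what makes it work is that every pairwise correlation comes from a distinct, independent bipartite resource, and a cycle of such independent pairwise channels can always be realised on one probability space. This is also the precise place where the hypothesis that a vertex and its neighbours are classical on input $0$ is used: it kills all input-$0$ correlations involving the restricted players, leaving a correlation structure simple enough to couple.
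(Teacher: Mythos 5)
Your argument is a genuinely different route from the paper's. The paper fixes the classical answers of the three restricted players, observes that every resource linking a restricted player (on input $0$) to a free one collapses to local randomness by non-signaling, and then uses the fact that the residual behaviour of the two free neighbours lies in the bipartite no-signaling polytope, i.e.\ is a convex combination of local deterministic points and PR-box points; it then checks the $\tfrac56$ bound on each extremal case separately (a small exhaustive computation for the deterministic points, and the uniform-marginal property of the PR box for the two questions in which only one free player is involved). You instead build a single probability space carrying all seven answer variables and invoke the $\mathbb{F}_2$ parity contradiction to get $\sum_Q P(\text{lose}\mid Q)\ge 1$. Your route is more conceptual and would generalize beyond the two-free-player situation, whereas the paper's buys concreteness: it never needs a global coupling, only the classification of extremal bipartite boxes.

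The place where your proof is not yet a proof is exactly the coupling lemma, and your stated justification --- ``every pairwise correlation comes from a distinct, independent bipartite resource'' --- is not quite true as written. The resource $\mathcal{R}_{3,4}$ between the two free players is used in \emph{three} of the six contexts ($Q_3$ with inputs $(1,0)$, $Q_4$ with $(0,1)$, and $Q_a$ with $(1,1)$), so you must glue three prescribed bipartite marginals of one and the same no-signaling box; this is where a CHSH-type obstruction could in principle appear. It does not here, because the three contexts form a path (a tree) on the four single-party views $3@0,3@1,4@0,4@1$, and marginals agreeing on shared vertices always glue along a tree --- but that observation must be made explicitly, since all four contexts of a PR box famously do \emph{not} glue. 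A second point needing care is the multi-round, adaptive nature of the resources (Definition 1): in $Q_2$, for instance, player $4$ is free, has input $0$, and interacts with both $\mathcal{R}_{2,4}$ and $\mathcal{R}_{3,4}$, so one must argue (via the ordered parametrization of resource randomness used in Claim 4, or directly from non-signaling) that it cannot mediate an extra correlation between $a_2^{(1)}$ and $a_3^{(0)}$ beyond what $\mathcal{R}_{2,3}$ supplies; otherwise the factorization of your coupled space into independent per-resource pieces is not justified. With those two lemmas supplied, your argument closes, and the lower bound $\tfrac56$ (all-zero answers) is unproblematic.
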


\begin{proof}
Assuming that $3$ players representing one vertex and its neighbors don't use their resources on input 0, we have the following scenario : \\\\
The two remaining players $i,j$ are neighbors and we will see the possible questions they receive. Bits in parenthesis represent the case when the player is not involved : 
\begin{center} \begin{tabular}{cc}
player i & player j\\
1 & 1 \\
0&(0) \\
1&0\\
0&1\\
(0)&0\\
(0)&(0) \\
\end{tabular}
\end{center}
By the non-signaling property, the two players remaining can't have any information about the other players's input, so they can't know, except concerning player j for i, or i for j, if they will use their non-signaling resources to answer the question. 
 
First, let's study what happens when the player $i$ uses his resources with player $j$, but he doesn't : we denote $I_i$ and $I_j$ their two vectors of inputs, and $(\mathcal{R}_{\lbrace i,j \rbrace})_{|i}$, $(\mathcal{R}_{\lbrace i,j \rbrace})_{|j}$ their two vectors of outputs from the resource. The final output $a_j$ won't depend on $(\mathcal{R}_{\lbrace i,j \rbrace})_{|j}$ and the player $i$ will use $a_i$. We have, by the non-signaling property : $p( (\mathcal{R}_{\lbrace i,j \rbrace})_{|i} \; | I_i I_j) = p( (\mathcal{R}_{\lbrace i,j \rbrace})_{|i} \; | I_i).$
So, the resource of player $i$ shared with $j$, can be seen as a local random variable for $i$ whose final output depend partially from this resource.

Let's consider, without loss of generality, that the two remaining players are players $2$ and $3$.
For this analysis, we will factorize the resource as : $$p(a_1 a_2 a_3 a_4 a_5 \; | x_1 x_2x_3x_4x_5)=p(a_1a_4a_5|x_1x_4x_5) \times p(a_2a_3 \; | x_2 x_3 x_1x_4x_5 a_1a_4a_5).$$

So, if we look at the behaviour of player $2$ and $3$, who can use their non-signaling resources whatever their inputs are, we get by the non-signaling property again : $$p (a_2 a_3 \; | x_2 x_3 \left[ x_k, k \neq 2,3 \right ]) = p(a_2 a_3 \; | x_2 x_3 000).$$
We have that $$(a_1 a_4 a_5) = f_1^n\Big( x_1,\bigcup\limits_{x \neq 1}(\mathcal{R}_{\lbrace 1,x \rbrace})_{|1},\lambda_1\Big)f_4^n\Big(x_4,\bigcup\limits_{x \neq 4}(\mathcal{R}_{\lbrace 4,x \rbrace})_{|4},\lambda_4\Big)f_5^n\Big(x_5,\bigcup\limits_{x \neq 5}(\mathcal{R}_{\lbrace 5,x \rbrace})_{|5},\lambda_5 \Big)$$ and for any fixed $(x_1,x_4,x_5, a_1,a_4,a_5)$, $$(a_2 a_3) = f_2^n\Big(x_2, \bigcup\limits_{x \neq 2}(\mathcal{R}_{\lbrace 2,x \rbrace})_{|2},\lambda_2 \Big)f_3^n\Big(\bigcup\limits_{x \neq 3}(\mathcal{R}_{\lbrace 3,x \rbrace})_{|3},\lambda_3 \Big).$$
As, on input $0$, the other players don't use any non-signaling resources, the behaviour of player $2$ and $3$ can be seen as a function of $x_{1,...,5}$ and $(\mathcal{R}_{\lbrace 2,3 \rbrace})_{|2},(\mathcal{R}_{\lbrace 2,3 \rbrace})_{|3} $. Thus, in this game, behaviour of player $2$ and $3$ can be simulated with a combination of local random variables, classical strategies, and the use of two-party non-signaling resources between them (which is equivalent to the use  of PR-boxes).
The conditional outputs's distribution of the two remaining players 2 and 3 is a convex combination of classical distribution and PR-box distribution. Thus we need to calculate the probability to lose to the game for all the extremal cases.

\begin{itemize}
\item The first case is when players $2$ and $3$ act classically, i.e, $$p(a_2a_3 \; | x_2x_3)=p(a_2 \; |x_2)p(a_3 \; |x_3).$$Knowing that on input $0$, players $\lbrace 1,4,5 \rbrace$ don't use their NS-resources, and, by the non-signaling property, we still have this equation : $$p \; (a_i \; | x_i  \left[ x_k, k \neq j \right ]) = p \; (a_i \; | x_i  00) \; \; \forall i \in \lbrace 1,4,5 \rbrace.$$
Thus, we can consider that all of these players have classical behaviour. And if we fix the strategy of players $1$, $4$ and $5$ to classical behaviour, we will have a new game : \\
\begin{center}
\begin{tabular}{ccc}
player 2 & player 3 & \^{S}\\
1&1 & 1+$a_1(1)$+$a_4(1)$+$a_5(1)$\\
0&(0)& $a_1(1)$+$a_5(0)$\\
1&0&$a_1(0)$\\
0&1&$a_4(0)$\\
(0)&0&$a_4(1)$+$a_5(0)$\\
(0)&(0) & $\ast$\\
\end{tabular}
\end{center}
And for this specific game, the classical value is $\dfrac{5}{6}$.
\item Now, if the distribution of the output $p(a_2a_3 \; |x_2x_3)$ is a PR-Box distribution, we notice that, in two questions, only one of the two players is involved, consequently, whatever result they wanted to output, they will be wrong in half of the cases. These two questions happen each with probability $\dfrac{1}{6}$, thus with this strategy, players lose with at least $\dfrac{1}{6}$. 
\end{itemize}

\end{proof}

Whatever the set of 3 players (representing one vertex and its neighbors) we chose, if they don't use their non-signaling resources on input 0, the five players won't be able to beat the classical value of the game, even using remaining non-signaling resources. The strategy $\overline{S}$ being a uniform distribution over strategies where 1 player and its neighbors ignore their resources, we obtain that $$ 9 \varepsilon \geq P(\overline{S} \; loses)  \geq \dfrac{1}{6}$$

From this, we can deduce that $P(S \; loses)\geq \dfrac{1}{54}\simeq 1.85\%.$
\end{proof}

\vspace{1cm}

\noindent{\large  \textbf{Robustness against classical strategies}.}

In this game defined as above, the classical value was $\dfrac{5}{6}$, and it was obtained with 6 questions chosen with equal probability. However, in the MCG associated with the graph $C_5$ \cite{contex}, there are exactly 11 different questions where players can lose. Thus we can try to analyze the game with a distribution over these 11 questions. The goal is to find a symmetric game between the different players for which the classical value is lower than $\dfrac{5}{6}$. We found a game, denoted $\mathcal{G}'_{C_5}$, that had a greater classical value. This is defined by the following set :

\begin{eqnarray*}
  \mathcal{G}'_{C_5} & = & \Big\lbrace \Big(\dfrac{3}{13},Q_a,\lbrace 1,2,3,4,5 \rbrace,1\Big), \Big(\dfrac{1}{13},Q_i,\lbrace i-1,i,i+1 \rbrace,0\Big), \Big(\dfrac{1}{13},Q'_i,\overline{\lbrace i  \rbrace},0\Big)  \Big\rbrace
\end{eqnarray*}

\begin{flushleft}
With $Q_a=1_11_21_31_41_5$, $Q_i=0_{i-2}0_{i-1}1_i0_{i+1}0_{i+2}$ , $Q'_i=0_{i-2}1_{i-1}0_i1_{i+1}0_{i+2}$ and $\overline{\lbrace i  \rbrace}=\lbrace 1,2,3,4,5 \rbrace \setminus \lbrace i \rbrace$.

\vspace{0.5cm}

We proved the following result :
\end{flushleft}
\begin{clm}
The best symmetric game protocol, that has $MCG(C_5)$ as underlying game, to separate quantum against classical players is $\mathcal{G}'_{C_5}$ and it has classical value $\dfrac{10}{13}$.	
\end{clm}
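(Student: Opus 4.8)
The plan is to turn the statement into a small linear program, exploiting that every winning predicate of $MCG(C_5)$ is $\mathbb{F}_2$-affine in the players' answers. First, a symmetry reduction: the eleven losing questions of $MCG(C_5)$ split, under the symmetry of the pentagon, into the orbit $\{Q_a\}$ and the orbit of the ten questions $Q_i,Q'_i$, so a symmetric protocol on $MCG(C_5)$ is described by the probability $p_a$ of $Q_a$ together with the common probability $p$ of each remaining question, with $p_a+10p=1$ and $p_a,p\ge 0$. Each of the eleven questions is won with certainty by the $C_5$ graph-state measurement strategy, so every such protocol has quantum value $1$; hence the best protocol for separating quantum from classical is the one minimising the classical value, i.e.\ maximising the probability $\beta$ with which classical players are forced to lose, and then the asserted classical value is $1-\beta$. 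Since randomising over deterministic strategies never beats the best deterministic one, $\beta=\min_S P(S\ \mathrm{loses})$ as $S$ ranges over the $2^{10}$ deterministic strategies $(a_i^{0},a_i^{1})_{i\in\mathbb{Z}_5}$, where $a_i^{x}$ is player $i$'s answer on input $x$.

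Next I would describe the code of loss-patterns. Reading off the predicates, $S$ loses $Q_a$ iff $\sum_i a_i^{1}\neq 1$, loses $Q_i$ iff $a_{i-1}^{0}+a_i^{1}+a_{i+1}^{0}\neq 0$, and loses $Q'_i$ iff $a_{i-2}^{0}+a_{i-1}^{1}+a_{i+1}^{1}+a_{i+2}^{0}\neq 0$; all eleven conditions are $\mathbb{F}_2$-affine in the ten bits of $S$. Hence the set $\mathcal{L}\subseteq\mathbb{F}_2^{11}$ of loss-patterns — the indicator of which questions a strategy loses — is a coset $V+r$ of a binary linear code. The kernel of the linear part consists of the strategies with $a_i^{1}=a_{i-1}^{0}+a_{i+1}^{0}$ and $a^{0}$ arbitrary, so $\dim V=5$, $|\mathcal{L}|=32$, and $r\notin V$ (which is exactly the pseudo-telepathy property). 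Using the transversal $a^{0}=0$ one indexes $\mathcal{L}$ by $v:=a^{1}\in\mathbb{F}_2^5$: the corresponding strategy loses $Q_a$ iff $\operatorname{wt}(v)$ is even, loses $Q_i$ iff $v_i=1$, and loses $Q'_i$ iff $v_{i-1}\neq v_{i+1}$. In particular the all-zero strategy ($v=0$) loses only $Q_a$, hence with probability $p_a$, while a single-support strategy ($v=e_j$) loses exactly $Q_j$, $Q'_{j-1}$ and $Q'_{j+1}$, hence with probability $3p$; enumerating the handful of cyclic orbit-types of $v$ (indexed by $\operatorname{wt}(v)$ and the spacing of $\operatorname{supp}(v)$) gives the complete list of loss-probabilities of $\mathcal{L}$.

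Then I would run the linear program. A finite check over the orbit-types of $\mathcal{L}$ shows $P(S\ \mathrm{loses})\ge\min(p_a,3p)$ for every deterministic $S$ and every symmetric protocol $(p_a,p)$, while the all-zero strategy achieves $p_a$ and a single-support strategy achieves $3p$; hence $\beta=\min(p_a,3p)$. Maximising $\min(p_a,3p)$ under $p_a+10p=1$ gives $p_a=3p$, i.e.\ $p=\tfrac1{13}$, $p_a=\tfrac3{13}$ — precisely $\mathcal{G}'_{C_5}$ — with $\beta=\tfrac3{13}$, so the classical value of $\mathcal{G}'_{C_5}$ is $1-\beta=\tfrac{10}{13}$ (the same finite check confirms that on $\mathcal{G}'_{C_5}$ every deterministic strategy loses with probability at least $\tfrac3{13}$, attained by the all-zero strategy). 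Moreover $\beta=\min(p_a,3p)$ with $p_a+10p=1$ always satisfies $\tfrac{13}{3}\beta\le p_a+10p=1$, so $\beta\le\tfrac3{13}$: no symmetric protocol on $MCG(C_5)$ attains classical value below $\tfrac{10}{13}$, and $\mathcal{G}'_{C_5}$ is the unique optimiser.

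I expect the crux to be the exhaustive enumeration in the middle step — obtaining the whole coset $\mathcal{L}$ of the $[11,5]_2$ code, organised by the pentagon symmetry, and verifying that every loss-pattern other than ``lose only $Q_a$'' and ``lose $Q_j,Q'_{j-1},Q'_{j+1}$'' is dominated by $\min(p_a,3p)$. That exhaustive domination claim is exactly what upgrades the easy inequality $\beta\le\tfrac3{13}$ to the exact value of the classical value; the affine/linearity observation, the kernel computation, and the one-variable optimisation are routine once it is in place.
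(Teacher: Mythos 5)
Your reduction of the loss patterns to the coset $\mathcal{L}=V+r$ of an $[11,5]_2$ code is correct and is a clean repackaging of the paper's exhaustive strategy table (the per-question parities, the kernel computation, and the transversal $a^0=0$ all check out), but the symmetry reduction at the very start is wrong, and the error is not harmless. The questions $Q_i$ (input weight $1$, three involved players) and $Q'_j$ (input weight $2$, four involved players) are never related by a permutation of the players, so under the pentagon symmetry the eleven questions split into \emph{three} orbits, $\lbrace Q_a\rbrace$, $\lbrace Q_i\rbrace_i$ and $\lbrace Q'_i\rbrace_i$, not two. A symmetric protocol is therefore a two-parameter family $(p_a,p_Q,p_{Q'})$ with $p_a+5p_Q+5p_{Q'}=1$ --- exactly the paper's parametrization by weights $x,y,z$ --- and your linear program only explores the slice $p_Q=p_{Q'}$. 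Within that slice your value $\beta=\min(p_a,3p)=3/13$ is correctly derived, but it cannot establish that $\mathcal{G}'_{C_5}$ is the \emph{best} symmetric protocol.

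Worse, running your own LP on the full two-parameter family contradicts the claim. Your coset contains the pattern $v=\mathbf{1}$ (the strategy $a_i=x_i$), which has odd weight and $v_{i-1}=v_{i+1}$ for all $i$, so it loses exactly the five questions $Q_i$ and nothing else, contributing the constraint $5p_Q$. Enumerating the orbit types of $\mathcal{L}$ gives $\min_S P(S\ \mathrm{loses})=\min\bigl(p_a,\;p_Q+2p_{Q'},\;5p_Q\bigr)$, and maximizing this over the simplex yields $1/4$ at $p_a:p_Q:p_{Q'}=5:1:2$, i.e.\ a symmetric protocol with classical value $3/4<10/13$ (its quantum value is still $1$, since the graph-state strategy wins every question with certainty). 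The paper's own proof has the same blind spot from the other side: it does keep $y$ and $z$ as independent weights, but then maximizes only $\min(x,y+2z)$, omitting the $5y$ row that appears in its own appendix table for the identity-type strategy. So the gap is twofold: your orbit claim is false, and the corrected optimization refutes rather than proves the stated optimality of $\mathcal{G}'_{C_5}$; only the assertion that $\mathcal{G}'_{C_5}$ itself has classical value $10/13$ survives your analysis.
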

\begin{proof}
A classical strategy is a convex distribution of deterministic behaviour for each player. The four possible behaviours for a player $i$ are : \begin{itemize}
\item $\forall \; x_i \in \lbrace 0,1 \rbrace , a_i=0$ : the strategy is denoted $0$.
\item $\forall \; x_i \in \lbrace 0,1 \rbrace , a_i=1$ : the strategy is denoted $1$
\item $\forall \; x_i \in \lbrace 0,1 \rbrace , a_i=x_i$, denoted $Id$
\item $\forall \; x_i \in \lbrace 0,1 \rbrace , a_i=1+x_i$, denoted $Not$.
\end{itemize}

$\mathcal{G}'_{C_5}$ is symetric for players, thus, two strategies have the same probability to win over the question if they can just be obtained by players permutation from one an other. Thus we can enumerate the different strategies for the players to this game.

The notation will be the following : The global strategy where players 1 and 2 answer $1$, where player 3 and 4 answer $Id$ and player 5 answers $Not$ is denoted by $1_2Id_2Not$ (this is equivalent to $Id_2Not1_2$ for example). The index represents the size of the group of players using this strategy. 

After it, we want to find the best probability distribution over questions to maximize the classical value of the game, moreover, we want to keep the symetry of the game. Thus, questions $Q_i$  will all be asked with the same probability for all $i$, such as questions $Q'_i$. If we call $x$ the weight of questions $Q_a$, $y$ for questions $Q_i$, and $z$ for questions $Q'_i$, then the set defining the game will be the following : \begin{eqnarray*}
  \mathcal{G}'_{C_5} & = & \Big\lbrace \Big(\dfrac{x}{x+5y+5z},Q_a,\lbrace 1,2,3,4,5 \rbrace,1 \Big),
   \Big(\dfrac{y}{x+5y+5z},Q_i,\lbrace i-1,i,i+1 \rbrace,0\Big),\\
   & &  \Big(\dfrac{z}{x+5y+5z},Q'_i,\overline{\lbrace i \rbrace},0\Big)\Big\rbrace
\end{eqnarray*}

We put in \textbf{Appendix} the table showing for each strategy, the weight of questions where players lose.

So, as we try to find the greatest classical value, we want to find values for $x$, $y$ and $z$ that maximizes the minimum for weights of losing questions on all strategies. Thus, we want $x$, $y$ and $z$ such that \textit{min}$\Big(\dfrac{x}{x+5y+5z},\dfrac{y+2z}{x+5y+5z}\Big)$ is maximal. We find $x=3 \; \text{and} \; y=z=1$.
 \end{proof}

\section{Conclusion}

	The gap found here is the best known to separate quantum theories from non-signaling through pseudo-telepathy games. Indeed Reichardt and Chao indicated that, in an unpublished work, they found a gap of $1.2\%$ for $C_5$, which has been improved in this paper. 
	
	In the last part, this robustness against classical strategies could have helped us to find a better gap for this game. But the tools we used did not allow to remove randomness to players as the set of questions was not adapted anymore. 
	
	It would be interesting to extend the method to the study of parallel games, or with product of predicates on graphs.

\bibliography{biblio}{}
\bibliographystyle{alpha} 

\nopagebreak[4]
\section*{Appendix}
\centering
\begin{tabular}{|c|c|}
 
\hline 
\textbf{Classical Strategy} & \textbf{Weight of losing questions} \\ 
\hline 
$0_5$ & \textcolor{red}{\textbf{$x$}} \\ 
\hline 
$1_5$ & $5y+5z$ \\ 
\hline 
 $Id_5$ & \textcolor{red}{\textbf{$x$}} \\ 
\hline 
$Not_5$ & $5y$ \\ 
\hline 
$01_4$ & $x+2y+4z$ \\ 
\hline 
$0Id_4$ & $x+4y+3z$ \\ 
\hline 
$0Not_4$ & $x+2y+2z$ \\ 
\hline 
$10_4$ & $3y+4z$ \\ 
\hline 
$1Id_4$ & $3y+2z$ \\ 
\hline 
$1Not_4$ & \textcolor{red}{\textbf{$y+2z$}} \\ 
\hline 
$Id0_4$ & \textcolor{red}{\textbf{$y+2z$}} \\ 
\hline 
$Id1_4$ & $3y+2z$ \\ 
\hline 
$IdNot_4$ & $y+4z$ \\ 
\hline 
$Not0_4$ & $x+2y+2z$ \\ 
\hline 
$Not1_4$& $x+4y+2z$ \\
\hline
$NotId_4$& $x+2y+4z$ \\
\hline
$0_21_3$ &  $4y+2z$ \\
\hline
$0_2Id_3$ & $3y+4z$ \\
\hline
$0_2Not_3$ & $x+4y+2z$ \\
\hline
$1_20_3$ & $x+2y+2z$ \\
\hline
$1_2Not_3$ & $x+2y+4z$ \\
\hline
$1_2Id_3$ & \textcolor{red}{\textbf{$y+2z$}}\\
\hline
$Not_20_3$ & $x+4y+2z$ \\ 
\hline
$Not_21_3$ & $3y+4z$ \\ 
\hline 
$Not_2Id_3$ & $3y+2z$ \\ 
\hline 
$Id_20_3$ & $x+2y+4z$ \\ 
\hline 
$Id_21_3$ & \textcolor{red}{\textbf{$y+2z$}} \\ 
\hline 
$Id_2Not_3$ & $x+2y+2z$ \\ 
\hline 
$01Id_3$ & $x+4y+2z$ \\ 
\hline 
$01Not_3$ & \textcolor{red}{\textbf{$y+2z$}} \\ 
\hline 
$0Id1_3$ & $x+2y+2z$ \\ 
\hline 
$0IdNot_3$ & $3y+2z$ \\ 
\hline 
$0Not1_3$ & $3y+4z$ \\ 
\hline 
$0NotId_3$ & $3y+4z$ \\ 
\hline 
$1Id0_3$ & $x+2y+4z$ \\ 
\hline 
$1IdNot_3$ & $x+2y+4z$ \\ 
\hline 
$1Not0_3$ & $3y+2z$ \\ 
\hline 
$1NotId_3$ & $3y+4z$ \\ 
\hline

\end{tabular}

\begin{table}
	\begin{tabular}{|c|c|}
\hline
\textbf{Classical Strategy} & \textbf{Weight of losing questions} \\ 
\hline

$NotId0_3$ & \textcolor{red}{\textbf{$y+2z$}} \\ 
\hline 
$NotId1_3$ & $x+4y+2z$ \\ 
\hline 
$0_21_2Not$ & $x+4y+2z$ \\ 
\hline 
$0_21_2Id$ & \textcolor{red}{\textbf{$y+2z$}} \\ 
\hline 
$0_2Not_21$ & $3y+4z$ \\ 
\hline 
$0_2Not_2Id$ & $3y+4z$ \\ 
\hline 
$0_2Id_21$ & $3y+2z$ \\ 
\hline 
$0_2Id_2Not$ & $x+2y+2z$ \\ 
\hline 
$1_2Id_2Not$ & $x+2y+4z$ \\ 
\hline 
$1_2Id_20$ & $x+2y+4z$ \\ 
\hline 
$1_2Not_20$ & $x+2y2z$ \\ 
\hline 
$1_2Not_2Id$ & $3y+2z$ \\ 
\hline 
$Not_2Id_21$ & $x+y+2z$ \\ 
\hline 
$Not_2Id_20$ & $x+4y+2z$ \\ 
\hline 
$Id_201Not$ & $x+y+2z$ \\ 
\hline 
$0_21IdNot$ & $x+4y+2z$ \\ 
\hline 
$1_20IdNot$ & $3y+2z$ \\ 
\hline 
$Not_201Id$ & $x+4y+2z$ \\ 
\hline 
	\end{tabular}
	\centering
	\caption{Weight of losing questions for Strategies}
\end{table}
\end{document}